\documentclass[aps,prl,letterpaper,reprint,twocolumn,floatfix,superscriptaddress,notitlepage,usenames,dvipsnames,svgnames,x11names,table,nofootinbib,longbibliography]{revtex4-2}

\pdfoutput=1
\usepackage{graphicx, color, graphpap}      % Include figure files
\usepackage{amsmath}
\usepackage{enumerate}
\usepackage{amssymb}
\usepackage{amsthm}
\usepackage{pstricks}
\usepackage{float}

\usepackage[pdfusetitle]{hyperref}
\hypersetup{pdflang={English},colorlinks=true,linkcolor=RoyalBlue,citecolor=RoyalBlue,urlcolor=ForestGreen,breaklinks=true}

\usepackage[T1]{fontenc}
\usepackage{bbm}
\usepackage{dsfont}
\usepackage[linesnumbered,ruled,vlined]{algorithm2e}
\SetKwInput{kwInit}{Init}
\usepackage{mathtools}

\usepackage{tikz}
\usetikzlibrary{positioning}
\usepackage{xcolor}
\usepackage[caption=false]{subfig}
\usepackage[ruled,vlined]{algorithm2e}
\usepackage{siunitx}
\usepackage{qcircuit}
\usepackage{pifont} % http://ctan.org/pkg/pifont
\usepackage{physics}
\usepackage{thmtools}
\usepackage{thm-restate}


\renewcommand{\eqref}[1]{(\ref{#1})}
\newtheoremstyle{example}{\topsep}{\topsep}%
{}%         Body font
{}%         Indent amount (empty = no indent, \parindent = para indent)
{\bfseries}% Thm head font
{:}%        Punctuation after thm head
{   }%     Space after thm head (\newline = linebreak)
{\thmname{#1}\thmnumber{ #2}}%\thmnote{ #3}}%         Thm head spec
\theoremstyle{example}
\newtheorem{theorem}{Theorem}
\newtheorem{lemma}{Lemma}

\theoremstyle{definition}

\newtheorem*{theorem*}{Theorem}

\def\orcid#1{\kern -0.4em\href{https://orcid.org/#1}{\includegraphics[keepaspectratio,width=0.7em]{orcid_logo.pdf}}}

\renewcommand{\H}{\mathcal{H}}

\usepackage{lipsum}

\long\def\ca#1\cb{} %Use for commenting out: \ca...\cb
\begin{document}
\title{Efficient classical simulation of large-scale unitary cluster Jastrow circuits}

\author{Hrishikesh Belagali}
\affiliation{Department of Computational Mathematics, Science, and Engineering, Michigan State University, East Lansing, MI 48823, USA}
\affiliation{Center for Quantum Computing, Science, and Engineering, Michigan State University, East Lansing, MI 48823, USA}

\author{Thomas Van Camp}
\affiliation{Department of Computer Science and Engineering, University of Michigan, Ann Arbor, MI 48109}
\affiliation{Center for Quantum Computing, Science, and Engineering, Michigan State University, East Lansing, MI 48823, USA}

\author{R. Pradeep}
\affiliation{Department of Physical Sciences, IISER Kolkata, Mohanpur, West Bengal 741246, India}

\author{Sourin Das}
\affiliation{Department of Physical Sciences, IISER Kolkata, Mohanpur, West Bengal 741246, India}

\author{Namit Anand}
\affiliation{HPE Quantum, Emergent Machine Intelligence, Hewlett Packard Labs, CA, USA}

\author{Ryan LaRose}
\thanks{Corresponding author:  \href{rmlarose@msu.edu}{rmlarose@msu.edu}}
\affiliation{Department of Computational Mathematics, Science, and Engineering, Michigan State University, East Lansing, MI 48823, USA}
\affiliation{Department of Electrical and Computer Engineering, Michigan State University, East Lansing, MI 48823, USA}
\affiliation{Department of Physics and Astronomy, Michigan State University, East Lansing, MI 48823, USA}
\affiliation{Center for Quantum Computing, Science, and Engineering, Michigan State University, East Lansing, MI 48823, USA}

\begin{abstract}
Recent experiments on quantum computers have challenged the limits of classical computation in chemistry, simulating ground states of strongly correlated molecules. Many of these experiments have utilized the unitary cluster Jastrow ansatz, a quantum circuit inspired by the unitary coupled cluster ansatz that can be tailored to current quantum hardware. Notably, the largest experiment in \href{https://doi.org/10.1126/sciadv.adu9991}{\textit{Sci. Adv.} \textbf{11}, 25 (2025)} executed a quantum circuit with 77 qubits and 10,570 gates on an IBM quantum computer and performed classical post-processing with up to 6400 nodes on Fugaku to compute ground state energies better than Hartree-Fock. In this work, we present a polynomial time classical algorithm to compute the energy of any single-layer unitary cluster Jastrow circuit, independent of locality constraints for quantum hardware. Our algorithm can reproduce the largest experiment from \href{https://doi.org/10.1126/sciadv.adu9991}{\textit{Sci. Adv.} \textbf{11}, 25 (2025)}  in less than a minute on a laptop, and through circuit optimization enabled by fast simulation we achieve a lower ground state energy than the experiment.%Further, we show that we can simulate single-layer unitary cluster Jastrow circuits for Hydrogen chains on up to 160 qubits, which is the largest number of qubits currently available on IBM quantum computers. 
\end{abstract}

% =============================================================================
% =============================================================================
\maketitle
% =============================================================================
% =============================================================================

% \tableofcontents

\textit{Introduction} ---
While recent experiments on quantum computers have demonstrated quantum advantage~\cite{Arute_Arya_Babbush_Bacon_Bardin_Barends_Biswas_Boixo_Brandao_Buell_,Zhong_Wang_Deng_Chen_Peng_Luo_Qin_Wu_Ding_Hu_2020,Wu_Bao_Cao_Chen_Chen_Chen_Chung_Deng_Du_Fan__2021,Zhong_Deng_Qin_Wang_Chen_Peng_Luo_Wu_Gong_Su_2021,Zhu_Cao_Chen_Chen_Chen_Chung_Deng_Du_Fan_Gong_2022,Madsen_Laudenbach_Askarani_Rortais_Vincent_Bulmer_Miatto_Neuhaus_Helt_Collins,Deng_Gu_Liu_Gong_Su_Zhang_Tang_Jia_Xu_Chen_2023,Kim_Eddins_Anand_Wei_van_Rosenblatt_Nayfeh_Wu_Zaletel_Temme_2023,dwave2024}, new and improved classical algorithms have challenged or refuted these claims~\cite{larose_brief_2024,Zhao_Zhong_Pan_Chen_Fu_Su_Xie_Zhao_Zhang_Ouyang__2024,Oh_Liu_Alexeev_Fefferman_Jiang_2024,Tindall_Fishman_Stoudenmire_Sels_2023,refute_ibm_1,refute_ibm_2,refute_ibm_3,refute_ibm_4}. Recently, there has been a  surge in large-scale quantum computations for chemistry following the development of sample-based quantum diagonalization (SQD)~\cite{Robledo2025} and the unitary cluster Jastrow (UCJ) ansatz~\cite{Matsuzawa_Kurashige_2020}, specifically the local UCJ (LUCJ)~\cite{motta_bridging_2023} tailored for current quantum hardware. These experiments include~\cite{Robledo2025} which executes LUCJ circuits with up to 77 qubits and 10,570 gates to compute the ground state energy of an iron sulfur cluster, as well as later experiments which execute LUCJ circuits with up to 94 qubits and 101,602 gates for molecular fragments of a protein-ligand complex~\cite{merz_crossing_2026}. Multiple other works have executed large-scale LUCJ circuits on quantum hardware~\cite{li_protein-ligand_2026,shirakawa_closed-loop_2025,shajan_molecular_2026}.

The (L)UCJ ansatz is appealing from both the physical perspective and from the hardware perspective. That is, the UCJ ansatz is related to the unitary coupled cluster singles and doubles (UCCSD) ansatz, for which it is known that the time to compute the energy is polynomial in the system size on a quantum computer and factorial on a classical computer~\cite{motta_bridging_2023}. And, from the hardware perspective, the LUCJ ansatz is appealing because it reduces the overhead of non-local interactions~\cite{necaise2026distributioncomplexityelectronicstructure}. The classical hardness of sampling from UCJ circuits has recently been studied, and it was shown that UCJ circuits can be used to perform arbitrary instantaneous quantum polynomial-time (IQP) computations, which are known to be classically hard to simulate assuming the polynomial hierarchy does not collapse~\cite{Hafid_Iwakiri_Tsubouchi_Yoshioka_Kohda_2025}. Also, classical algorithms based on matchgate + CPHASE simulation have recently been developed specifically for LUCJ circuits~\cite{Hassman_ReardonSmith_Ravi_Chong_Sung_2025}. Here, the simulation cost grows exponentially in the number of CPHASE gates, and the authors are able to simulate up to one layer of a 52 qubit LUCJ circuit. Other simulation strategies face similar challenges: non-local interactions in the UCJ ansatz are are challenging for tensor network simulation, and truncated Heisenberg evolution~\cite{begusic_real-time_2025} typically discards too many terms to achieve chemically relevant energies.

In this work, we provide a polynomial time classical algorithm for exact computation of the energy of any single-layer UCJ circuit, independent of locality. As a particular example, we simulate the 77 qubit, 10,570 gate experiment for the iron sulfur cluster from~\cite{Robledo2025}, which used a state-of-the-art superconducting qubit quantum computer and up to 6400 nodes on Fugaku, a top 10 supercomputer, in less than a minute on a laptop. Our algorithm computes energy (weak simulation) and does not sample bitstrings (strong simulation), and is therefore unable to run bitstring-based post-processing like SQD. Nonetheless, we show that circuit optimization enabled by fast energy calculation achieves a lower energy than achieved by SQD in~\cite{Robledo2025}. Additionally, we compute the energy of single-layer UCJ circuits for Hydrogen chains with up to 160 qubits, which is the largest number of qubits currently available on IBM and other state-of-the-art superconducting qubit quantum computers. Our algorithm is based on Heisenberg evolution of the fermionic Hamiltonian, as well as L\"{o}wdin's formula for computing the energy. Our work shows that, in the quest for quantum advantage in chemistry~\cite{Lee_Lee_Zhai_Tong_Dalzell_Kumar_Helms_Gray_Cui_Liu_2023}, single-layer UCJ circuits are insufficient for beyond-classical computation.

\textit{Unitary cluster Jastrow circuits} \label{sec:ucj-circuits} ---
The unitary cluster Jastrow (UCJ) ansatz was introduced in~\cite{Matsuzawa_Kurashige_2020} as a unitary variant of Neuscamman's cluster Jastrow operator. The local UCJ (LUCJ) ansatz was introduced in~\cite{motta_bridging_2023} to make the UCJ ansatz more amenable to current quantum hardware. The starting point is the unitary coupled cluster (UCC) ansatz~\cite{Anand_2022}
\begin{equation} \label{eqn:ucc-ansatz}
    |\psi\rangle_\text{UCC} := e^{T - T^\dagger} |\phi_0\rangle
\end{equation}where $T = \sum_i T_i $ is the skew-hermitian fermionic excitation operator and $|\phi_0\rangle$ is a reference state, canonically the Hartree-Fock state. While the sum over $i$ extends over all (single, double, triple, ...) fermionic excitations, a common cutoff is to include only single and double excitations in what is referred to as the UCC singles and doubles (UCCSD) ansatz. 
It is known that the time to compute the energy
\begin{equation}
    E = \langle \psi | \H | \psi \rangle_{\text{UCCSD}}
\end{equation}
for Hamiltonian $\H$ is polynomial in the system size on a quantum computer, and factorial in the system size on a classical computer~\cite{motta_bridging_2023}. Currently, the cost of implementing the UCC(SD) ansatz is prohibitive for current/near-future quantum computers.

\begin{figure}
    \centering
    \includegraphics[width=0.9\linewidth]{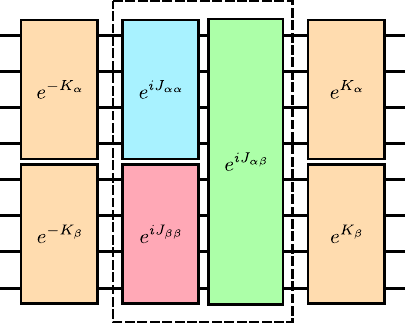}
    \caption{Structure of the unitary cluster Jastrow (UCJ) ansatz~\eqref{eqn:ucj-ansatz} for $L = 1$ layer. The top qubit lines are $\alpha$ spins and the bottom qubit lines are $\beta$ spins. The orange boxes can be implemented with $O(N^2)$ Givens rotations where $N = N_o + N_v$ is the sum of occupied and virtual orbitals. These operators do not change between the UCJ and local UCJ ansatz. In the local UCJ (LUCJ) ansatz, sparsity is imposed on the Jastrow $J = J_{\alpha \alpha} + J _{\beta \beta} + J_{\alpha \beta}$ operators (blue, red, and green operators in the dashed box) by removing interactions which are not supported on a given quantum computer.}
    \label{fig:ucj}
\end{figure}

The unitary cluster Jastrow (UCJ) ansatz can be related to a twice-factorized low-rank decomposition of the UCCSD ansatz. The UCJ ansatz is parameterized by a number of layers (or depth) $L$ and has the form
\begin{equation} \label{eqn:ucj-ansatz}
    |\psi\rangle_\text{UCJ} := \left[ \prod_{\mu = 1}^{L} e^{K_\mu} e^{i J_\mu} e^{-K_\mu} \right] |\phi_0\rangle 
\end{equation}
where $K_\mu$ and $J_\mu$ are one body operators of the form
\begin{equation} \label{eqn:orbital-rotation-K_mu}
    K_\mu := \sum_{p q, \sigma} k_{pq}^{\mu} a_{p \sigma}^\dagger a_{q \sigma}  % Note there is a typo in Eqn. (4) of https://pubs.rsc.org/sc/article/14/40/11213/827151/Bridging-physical-intuition-and-hardware?silentauthchecked=true which has a dagger on the a_{q \sigma}. Check Eqn. (6) of https://arxiv.org/abs/1909.12410 to verify.
\end{equation}
and
\begin{equation} \label{eqn:Jastrow-operator-J_mu}
    J_\mu := \sum_{pq, \sigma \tau} j_{pq, \sigma \tau}^\mu \hat{n}_{p \sigma} \hat{n}_{q \tau} .
\end{equation}
Here, $p, q \in [N]$ label molecular spatial orbitals, $\sigma, \tau$ label spin polarizations ($\alpha$ and $\beta$ for spin up and spin down electrons), $a^\dagger$ and $a$ are the fermionic creation and annihilation operators, $\hat{n}$ is the number operator, and $k_{pq}^{\mu}$ and $j_{pq, \sigma \tau}^\mu$ denote matrix elements. The former matrix $k$ is complex and anti-Hermitian, and the latter matrix $j$ is real and symmetric. Assuming a Jordan-Wigner fermion-to-qubit mapping, each $e^{\pm K_\mu}$ orbital rotation can be realized in a quantum circuit with $O(N^2)$ Givens rotations on a device with linear connectivity. Implementing each $e^{i J_\mu}$ Jastrow operator also requires $O(N^2)$ two-qubit gates, but this assumes all-to-all connectivity or a fermionic swap network~\cite{motta_bridging_2023,jiang_quantum_2018}. The general structure of the UCJ ansatz is shown in Fig.~\ref{fig:ucj}.

The local UCJ (LUCJ) ansatz imposes sparsity on the $j_{pq, \sigma \tau}^\mu$ matrix, removing interactions that are unavailable or prohibitively expensive on a given quantum computer. For example, a common choice is $j_{p \alpha, q \beta} \mapsto j_{p \alpha, p \beta}$
for the opposite spin number-number terms, and
$
j_{p \sigma, q \sigma} \mapsto j_{p \sigma, (p + 1) \sigma}
$
for the same spin ($\sigma \in \{ \alpha, \beta \}$) number-number terms. The LUCJ circuit also contains $O(N^2)$ Givens rotations for each orbital rotation, and it contains $O(N)$ two-qubit gates for the sparsified Jastrow operation. Ref.~\cite{motta_bridging_2023} claims that, while the cost of simulating (L)UCJ circuits is polynomial on a quantum computer (from the above gate counts), it is super-polynomial on a classical computer.

\textit{Classical simulation algorithm} \label{sec:classical-simulation} ---
We work in second quantization and write the fermionic Hamiltonian as
\begin{equation} \label{eqn:hamiltonian-second-quantization}
    \H = E_0 + \sum_{pq, \sigma} h_{pq} a_{p \sigma}^\dagger a_{q \sigma} + \frac{1}{2} \sum_{pqrs, \sigma \tau} g_{pqrs} a_{p \sigma}^\dagger a_{q \tau}^\dagger a_{s \tau} a_{r \sigma} . 
\end{equation}
Our main result is the following:
\begin{theorem}(Polynomial time classical simulation of depth one UCJ circuits.)
    The energy
    \begin{equation}
        E_{\text{UCJ1}} := \langle \psi | \H | \psi \rangle_{\text{UCJ1}} 
    \end{equation}
    where $|\psi \rangle_{\text{UCJ1}}$ is an $L = 1$ UCJ ansatz~\eqref{eqn:ucj-ansatz} for Hamiltonian $\H$~\eqref{eqn:hamiltonian-second-quantization} with $N$ orbitals can be computed exactly on a classical computer in $O(N^7)$ time.
\end{theorem}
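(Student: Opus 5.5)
The plan is to reduce the energy to $O(N^4)$ overlaps between Slater determinants, each computable in $O(N^3)$ time with L\"{o}wdin's formula. For $L=1$ we have $|\psi\rangle = e^{K}e^{iJ}e^{-K}|\phi_0\rangle$, so
\begin{equation*}
E_{\text{UCJ1}} = \langle \phi' | e^{-iJ} \H' e^{iJ} | \phi' \rangle,
\end{equation*}
where $|\phi'\rangle := e^{-K}|\phi_0\rangle$ and $\H' := e^{-K}\H e^{K}$. The steps are as follows.

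\textbf{Step 1 (orbital rotations).} Since $K$ is a one-body operator, $e^{-K}$ is a single-particle basis change $U=e^{-k}$, acting identically on both spins. Hence $|\phi'\rangle$ is again a Slater determinant, with orbital coefficient matrix $C = U_{:,\mathrm{occ}}$. Likewise $\H'$ has the form~\eqref{eqn:hamiltonian-second-quantization}, with $h\mapsto U^\dagger h U$ and $g$ transformed on each of its four indices. Done one index at a time, this costs $O(N^5)$.

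\textbf{Step 2 (Heisenberg conjugation by the Jastrow factor).} Because all $\hat n$ commute, $J$ is diagonal in the occupation basis. A creation operator satisfies
\begin{equation*}
e^{-iJ}a^\dagger_{p\sigma}e^{iJ} = a^\dagger_{p\sigma}\, e^{-i\theta_{p\sigma}}, \qquad \theta_{p\sigma} = j_{pp,\sigma\sigma} + 2\sum_{q\tau} j_{pq,\sigma\tau}\hat n_{q\tau},
\end{equation*}
with the adjoint relation holding for $a_{p\sigma}$. Each term $a^\dagger_p a_q$ or $a^\dagger_p a^\dagger_q a_s a_r$ of $\H'$ is therefore mapped to the same string multiplied by a scalar phase and a factor $\exp\!\big(i\sum_{r\tau}\varphi^{(\text{term})}_{r\tau}\hat n_{r\tau}\big)$. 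The angles $\varphi$ are read off from rows of $j$. Commuting number operators past the ladder operators only shifts these angles by diagonal entries of $j$, and I will track these shifts explicitly. The key point is that $e^{i\sum\varphi_r \hat n_r}$ is the exponential of a one-body operator. Acting on $|\phi'\rangle$, it yields another Slater determinant with coefficients $D_\varphi C$, where $D_\varphi=\mathrm{diag}(e^{i\varphi_r})$.

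\textbf{Step 3 (L\"{o}wdin evaluation).} Every term now reduces to a transition matrix element $\langle \phi'| a^\dagger_p a_q |\chi\rangle$ or $\langle\phi'|a^\dagger_p a^\dagger_q a_s a_r|\chi\rangle$ between two non-orthogonal determinants, with $|\chi\rangle$ having coefficients $D_\varphi C$. Both spin sectors factor, so the overlap is a product of $\alpha$ and $\beta$ determinants. Define $S=C^\dagger D_\varphi C$ (size $N_o\times N_o$) and the transition density $\rho = D_\varphi C S^{-1} C^\dagger$. L\"{o}wdin's formula (the generalized Wick theorem) then gives $\det S\,\rho_{qp}$ for one-body terms. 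For two-body terms it gives $\det S\,(\rho_{rp}\rho_{sq}-\rho_{sp}\rho_{rq})$ when $\sigma = \tau$, and the product of the two spin sectors' one-body factors when $\sigma\neq\tau$.

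\textbf{Step 4 (cost accounting).} The matrix $D_\varphi$ depends on the term only through the $O(1)$ indices that carry the phase. There are $O(N^4)$ distinct index tuples $(p,q,r,s)$ (times a constant number of spin choices), and each requires one $O(N^3)$ determinant and linear solve. This gives $O(N^7)$ total, which dominates the $O(N^5)$ of Step 1. The one-body part costs only $O(N^5)$.

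I expect two steps to be the main obstacles. The first is the exact phase bookkeeping in Step 2: the self-interaction and ordering shifts for the two-body terms, especially when indices coincide, must be handled carefully. The second is robustness of Step 3 when $S$ is singular or ill-conditioned, since a phase can make $\langle\phi'|\chi\rangle$ vanish. I would handle this by evaluating $\det S\,\rho$ through the adjugate $\det S\, S^{-1}=\mathrm{adj}(S)$, computed via an SVD of $S$ in $O(N_o^3)$. This form remains well-defined when $\det S=0$, so the bound holds without any genericity assumption.
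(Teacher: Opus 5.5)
Your main line matches the paper's proof. You rotate the integrals through the final orbital rotation in $O(N^5)$. You then conjugate by $e^{iJ}$, so each of the $O(N^4)$ strings acquires a scalar phase and a factor $e^{i\varphi\cdot\hat{n}}$; your $\theta_{p\sigma}$ equals the paper's $\hat{\Lambda}_{p\sigma}+2j_{pp,\sigma\sigma}$ once $a^\dagger_{p\sigma}$ is moved to the left. Finally you evaluate each term by L\"{o}wdin's formula at $O(N^3)$, for $O(N^7)$ overall. The step that fails is your singular-$S$ patch. Replacing $\det S\,S^{-1}$ by $\mathrm{adj}(S)$ works only when a single $S^{-1}$ appears per spin sector, i.e.\ for one-body and opposite-spin two-body terms. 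A same-spin two-body term $\det S\,(\rho_{rp}\rho_{sq}-\rho_{sp}\rho_{rq})$ contains $S^{-1}$ twice. Let $X$ be rows $r,s$ of $D_\varphi C$ and $Y$ columns $p,q$ of $C^\dagger$. Your substitution then gives $\det(X\,\mathrm{adj}(S)\,Y)/\det S$. If $\det S=0$, $\mathrm{adj}(S)$ has rank at most one, so this is $0/0$, even though the matrix element need not vanish. So your claim that no genericity assumption is needed is unproven for exactly the $O(N^4)$ terms that set the cost. A repair within your framework is the Schur-complement identity
\[
\det S\,\bigl(\rho_{rp}\rho_{sq}-\rho_{sp}\rho_{rq}\bigr)=\det\begin{pmatrix} S & Y \\ X & 0 \end{pmatrix}.
\]
The matrix element and the bordered determinant are both polynomials in the orbital coefficients and agree whenever $S$ is invertible, so they agree everywhere. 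The right side costs $O(N_o^3)$. The paper instead takes an SVD of $S$ and applies Burton's generalized non-orthogonal Wick theorem, discarding zero singular values, at the same cost.

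There are also two smaller slips. First, with $U=e^{-k}$ and $C=U_{:,\mathrm{occ}}$ one has $e^{-K}a_p^\dagger e^{K}=\sum_r U_{rp}a_r^\dagger$. The rotated one-body matrix is therefore $UhU^\dagger$ (the paper's $\tilde h=uhu^\dagger$), not $U^\dagger hU$; with your formula, $J=0$ would not return $\langle\phi_0|\H|\phi_0\rangle$. Second, the reordering shifts are not only diagonal entries of $j$; for example, $c^{[1]}$ contains $2j_{pq,\sigma\sigma}$. The bookkeeping you flagged disappears if you view $j$ as a symmetric matrix on spin-orbitals. A string changes the occupation vector by $d=e_{p\sigma}+e_{q\tau}-e_{r\sigma}-e_{s\tau}$, and $J$ is diagonal in the occupation basis. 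So conjugation gives exactly $e^{-i\,d^{T}jd}$ times the string times $e^{-2i\,(jd)\cdot\hat{n}}$ on the right. This holds for coinciding indices and reproduces the paper's $c^{[2]}=-d^{T}jd$ and $\phi^{(pqrs,\sigma\tau)}=-2jd$.
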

The algorithm uses Heisenberg evolution of the second-quantized Hamiltonian through the UCJ circuit, and proceeds in three steps. First (Lemma~\ref{lemma:backprop-through-orbital-rotation}) we backpropagate the one- and two-body tensors through the final orbital rotation in the circuit. Second (Lemma~\ref{lemma:backprop-through-jastrow}), we modify the backpropagated Hamiltonian to implement the Jastrow operation. We remark that this holds for any UCJ circuit, whether the Jastrow operation has been sparsified (LUCJ) or not. Last, we show that the final energy calculation can be executed efficiently through L\"{o}wdin's formula (Lemma~\ref{lemma:lowdin}) for computing matrix elements between non-orthogonal Slater determinants. 

\begin{lemma} \label{lemma:backprop-through-orbital-rotation}
    Propagating the second-quantized Hamiltonian $\H$~\eqref{eqn:hamiltonian-second-quantization} through the orbital rotations $e^{K_\alpha} e^{K_\beta}$ can be done exactly via updating the one-electron integrals
    \begin{equation} \label{eqn:one-electron-integrals-update-through-orbital-rotation}
        \tilde{h}_{p'q'} := \sum_{pq} u_{p'p} h_{pq} u_{q'q}^*
    \end{equation}
    in $O(N^3)$ time and by updating the two-electron integrals
    \begin{equation} \label{eqn:two-body-integrals-update-through-orbital-rotation}
        \tilde{g}_{p'q'r's'} := \sum_{pqrs} u_{p'p} u_{q'q} g_{pqrs} u_{r'r}^* u_{s's}^*
    \end{equation}
    in $O(N^5)$ time.
    Here,
    \begin{equation}
        u := e^{-k}
    \end{equation}
    is the unitary $N \times N$ matrix exponential of $k$, the final orbital rotation~\eqref{eqn:orbital-rotation-K_mu} in the UCJ circuit (see Fig.~\ref{fig:ucj}).
\end{lemma}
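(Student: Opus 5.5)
The plan is to compute how a single creation and annihilation operator transforms under the orbital rotation, and then use multiplicativity of conjugation to handle the products of operators in $\H$. We want $e^{-K}\H e^{K}$, with $K=K_\alpha+K_\beta$ the final rotation. The two spin sectors commute because $[a^\dagger_{p\alpha}a_{q\alpha},a^\dagger_{r\beta}a_{s\beta}]=0$, so $e^{K}=e^{K_\alpha}e^{K_\beta}$ and each spin can be treated separately with the same matrix $k$.

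\textbf{Step 1: single operators.} Since $K=\sum_{pq}k_{pq}a^\dagger_{p}a_{q}$ is quadratic (spin index suppressed), the canonical anticommutation relations give
\begin{equation*}
[K,a^\dagger_r]=\sum_p k_{pr}a^\dagger_p, \qquad [K,a_r]=-\sum_q k_{rq}a_q .
\end{equation*}
So the adjoint action of $K$ preserves the span of the creation operators, and separately the span of the annihilation operators, acting as a linear map given by $k$ (or $-k^{T}$). Summing the Baker--Campbell--Hausdorff series $e^{-K}Xe^{K}=\sum_n \frac{(-1)^n}{n!}\mathrm{ad}_K^n X$ then yields
\begin{equation*}
e^{-K}a^\dagger_p e^{K}=\sum_{p'}(e^{-k})_{p'p}\,a^\dagger_{p'}=\sum_{p'}u_{p'p}a^\dagger_{p'}.
\end{equation*}
Taking the adjoint, and using that $k$ is anti-Hermitian so $u$ is unitary, gives $e^{-K}a_q e^{K}=\sum_{q'}u^*_{q'q}a_{q'}$. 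The main bookkeeping obstacle is getting the index placement exactly right: which index of $u$ is summed, and whether $u$ or $u^{T}$ appears. I would check this with a one-mode or two-mode example to fix conventions so that they match \eqref{eqn:one-electron-integrals-update-through-orbital-rotation}.

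\textbf{Step 2: the Hamiltonian.} Conjugation is an algebra automorphism, so I insert the Step 1 result into each factor of each term of \eqref{eqn:hamiltonian-second-quantization}. The scalar $E_0$ is unchanged. The one-body term becomes $\sum_{p'q'}\big(\sum_{pq}u_{p'p}h_{pq}u^*_{q'q}\big)a^\dagger_{p'}a_{q'}$. The two-body term $a^\dagger_{p\sigma}a^\dagger_{q\tau}a_{s\tau}a_{r\sigma}$ picks up $u_{p'p}u_{q'q}u^*_{s's}u^*_{r'r}$. After relabeling, this gives \eqref{eqn:two-body-integrals-update-through-orbital-rotation}. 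The spin structure is preserved because $u$ is spin-independent. Hence the rotated Hamiltonian has the same form as $\H$, with $\tilde h$ and $\tilde g$ in place of $h$ and $g$, and the update is exact.

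\textbf{Step 3: complexity.} Computing $u=e^{-k}$ takes $O(N^3)$ time, for example by diagonalizing $k$. The one-body update is two successive matrix products, so it costs $O(N^3)$. For the two-body update, summing all four indices at once would cost $O(N^8)$. Instead I would contract one index at a time, for example
\begin{equation*}
g^{(1)}_{p'qrs}=\sum_p u_{p'p}g_{pqrs}, \quad \text{then } q, \text{ then } r, \text{ then } s.
\end{equation*}
Each of these four contractions costs $N^4\cdot N=O(N^5)$, so the total is $O(N^5)$.
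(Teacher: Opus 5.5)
Your proposal is correct and follows the paper's proof essentially step for step. The Hadamard/Campbell series applied to a single creation operator gives $e^{-K}a_p^\dagger e^{K}=\sum_{p'}u_{p'p}a_{p'}^\dagger$ with $u=e^{-k}$, the adjoint handles annihilators, substitution into each term yields $\tilde h$ and $\tilde g$, and contracting one index at a time gives $O(N^5)$. Your remarks on the commuting spin sectors and on the $O(N^3)$ cost of forming $u$ are details the paper leaves implicit; note also that the adjoint step really relies on $K^\dagger=-K$, which follows from $k^\dagger=-k$, rather than on unitarity of $u$.
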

A proof of Lemma~\ref{lemma:backprop-through-orbital-rotation} is given in the Appendix. We remark that the number of terms in the backpropagated $\H$ does not change, only the elements of the one-electron and two-electron integrals $h$ and $g$, and this update is exact. As mentioned, this is in contrast to recent simulations of quantum circuits with Pauli or Majorana backpropagation in which the number of terms grows exponentially and truncation is used as a heuristic~\cite{refute_ibm_4}.

\begin{lemma} \label{lemma:backprop-through-jastrow}
    Propagating the second-quantized Hamiltonian $\H$~\eqref{eqn:hamiltonian-second-quantization} through a Jastrow operator can be done exactly in $O(N^5)$ time. This is done in two steps. The first step is updating the one-electron terms to 
    \begin{equation}
        \sum_{pq, \sigma} h_{pq} e^{i c_{p q \sigma}^{[1]}} a_{p \sigma}^\dagger a_{q \sigma} e^{i \boldsymbol{\phi}_{p q \sigma}^{[1]} \cdot \boldsymbol{\hat{n}}}
        %e^{-i J} \left( \sum_{pq, \sigma} h_{pq} a_{p\sigma}^\dagger a_{q \sigma} \right) e^{iJ} = \sum_{pq, \sigma} h_{pq} e^{i c_{p q \sigma}^{[1]}} a_{p \sigma}^\dagger a_{q \sigma} e^{i \boldsymbol{\phi}_{p q \sigma}^{[1]} \cdot \boldsymbol{\hat{n}}}
    \end{equation}
    where
    \begin{equation}
        c_{pq\sigma}^{[1]} := 2 j_{pq, \sigma \sigma} -j_{pp, \sigma\sigma} - j_{qq, \sigma \sigma} 
    \end{equation}
    and
    \begin{equation}
        \boldsymbol{\phi}_{p q \sigma}^{[1]} \cdot \boldsymbol{\hat{n}}  := \sum_{r\lambda} \phi_{r \lambda }^{(p q \sigma)} \hat{n}_{r \lambda} 
    \end{equation}
    with
    \begin{equation}
        \phi_{r \lambda }^{(p q \sigma)} := 2 \left(  - j_{pr, \sigma \lambda} + j_{qr, \sigma \lambda} \right) .
    \end{equation}
    This update preserves the $O(N^2)$ term count and can be done $O(N^3)$ time, since per term there is one $O(1)$ scalar phase $c_{pq \sigma}^{[1]}$ and one $O(N)$ vector phase $\phi_{r \lambda }^{(p q \sigma)}$.  The second step is updating the two-electron terms to
    \begin{equation} \label{eqn:two-body-electron-update-after-backpropagating-through-Jastrow-operator}
        \frac{1}{2} \sum_{pqrs, \sigma \tau} e^{i c_{pqrs, \sigma \tau}^{[2]}} g_{pqrs} a_{p \sigma}^\dagger a_{q \tau}^\dagger a_{s \tau} a_{r \sigma} e^{i \boldsymbol{\phi}_{p q r s, \sigma \tau}^{[2]} \cdot \boldsymbol{\hat{n}}}
    \end{equation}
    %
    % \begin{widetext}
    %     \begin{equation} \label{eqn:two-body-electron-update-after-backpropagating-through-Jastrow-operator}
    %         e^{-i J} \left( \frac{1}{2} \sum_{pqrs, \sigma \tau} g_{pqrs} a_{p \sigma}^\dagger a_{q \tau}^\dagger a_{s \tau} a_{r \sigma} \right) e^{iJ} =
    %         \frac{1}{2} \sum_{pqrs, \sigma \tau} e^{i c_{pqrs, \sigma \tau}^{[2]}} g_{pqrs} a_{p \sigma}^\dagger a_{q \tau}^\dagger a_{s \tau} a_{r \sigma} e^{i \boldsymbol{\phi}_{p q r s, \sigma \tau}^{[2]} \cdot \boldsymbol{\hat{n}}}
    %     \end{equation}
    % \end{widetext}
    %
    where
    \begin{equation}
    \footnotesize
    \begin{split}
        c_{pqrs, \sigma \tau}^{[2]} &:= 2 \left( - j_{pq,\sigma\tau} + j_{pr,\sigma\sigma} + j_{ps,\sigma\tau}
            + j_{qr,\tau\sigma} + j_{qs,\tau\tau} - j_{rs,\sigma\tau} \right) \\
            &\qquad - j_{pp,\sigma\sigma} - j_{qq,\tau\tau} - j_{rr,\sigma\sigma} - j_{ss,\tau\tau}
    \end{split}
\end{equation}
    and
    \begin{equation}
        \boldsymbol{\phi}_{p q r s, \sigma \tau}^{[2]} \cdot \boldsymbol{\hat{n}}  := \sum_{u \mu} \phi_{u \mu }^{(p q r s, \sigma \tau)} \hat{n}_{u \mu} 
    \end{equation}
    with
    \begin{equation}
        \phi_{u \mu }^{(p q r s, \sigma \tau)} := 2 \left( -\,j_{pu, \sigma \mu} - j_{qu, \tau \mu} + j_{ru, \sigma \mu} + j_{su, \tau \mu} \right).
    \end{equation}
    This update preserves the $O(N^4)$ term count and can be done in $O(N^5)$ time, since per term there is one $O(1)$ scalar phase $c_{pqrs, \sigma \tau}^{[2]}$ and one $O(N)$ vector phase $\phi_{u \mu }^{(p q r s, \sigma \tau)}$.
\end{lemma}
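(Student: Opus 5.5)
The plan is to compute the conjugation $e^{-iJ} O e^{iJ}$ for each fermionic monomial $O$ in $\H$, using only the fact that $J$ is a quadratic polynomial in commuting number operators. The basic tool is the identity, valid for any function $f$ of the occupation numbers,
\begin{equation}
f(\boldsymbol{\hat{n}})\, a_{p\sigma}^\dagger = a_{p\sigma}^\dagger f(\boldsymbol{\hat{n}} + \boldsymbol{e}_{p\sigma}), \qquad f(\boldsymbol{\hat{n}})\, a_{q\sigma} = a_{q\sigma} f(\boldsymbol{\hat{n}} - \boldsymbol{e}_{q\sigma}).
\end{equation}
This follows because $\hat{n}_{r\lambda}$ commutes with $a_{p\sigma}^\dagger$ unless $(r,\lambda)=(p,\sigma)$, and $\hat{n}_{p\sigma} a^\dagger_{p\sigma} = a^\dagger_{p\sigma}(\hat{n}_{p\sigma}+1)$. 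I will take $f = e^{-iJ}$ and push it to the right through every ladder operator of the monomial. It then meets $e^{iJ}$, and what is left is $O\, e^{-iJ(\boldsymbol{\hat{n}} + \boldsymbol{\delta})} e^{iJ(\boldsymbol{\hat{n}})}$. Here $\boldsymbol{\delta}$ is the net shift produced by the operators: $+1$ for each creation index and $-1$ for each annihilation index, taken in the order the shifts accumulate.

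Since $J(\boldsymbol{n}) = \sum j_{pq,\sigma\tau} n_{p\sigma} n_{q\tau}$ with $j$ symmetric, the key computation is
\begin{equation}
J(\boldsymbol{n}+\boldsymbol{\delta}) - J(\boldsymbol{n}) = 2\sum_{u\mu} \Big(\sum_{x} j_{x u,\cdot\mu}\,\delta_x\Big) n_{u\mu} + \boldsymbol{\delta}^T j\, \boldsymbol{\delta}.
\end{equation}
The part linear in $\boldsymbol{\hat{n}}$ gives the vector phase $\boldsymbol{\phi}$, and the constant $\boldsymbol{\delta}^T j \boldsymbol{\delta}$ gives the scalar phase $c$. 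Both exponentials are functions of commuting number operators, so they combine into a single exponential.

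For the one-body term $a_{p\sigma}^\dagger a_{q\sigma}$ we have $\boldsymbol{\delta} = \boldsymbol{e}_{p\sigma} - \boldsymbol{e}_{q\sigma}$. Substituting gives $\phi^{(pq\sigma)}_{r\lambda} = 2(-j_{pr,\sigma\lambda} + j_{qr,\sigma\lambda})$ and $c^{[1]} = 2j_{pq,\sigma\sigma} - j_{pp,\sigma\sigma} - j_{qq,\sigma\sigma}$. The overall sign of $c$ follows from $-(\boldsymbol{\delta}^T j\boldsymbol{\delta})$, which matches the stated formula.

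For the two-body term $a^\dagger_{p\sigma} a^\dagger_{q\tau} a_{s\tau} a_{r\sigma}$ we have $\boldsymbol{\delta} = \boldsymbol{e}_{p\sigma} + \boldsymbol{e}_{q\tau} - \boldsymbol{e}_{s\tau} - \boldsymbol{e}_{r\sigma}$. The linear part then gives exactly $\phi^{(pqrs,\sigma\tau)}_{u\mu}$ as stated. The quadratic form $-\boldsymbol{\delta}^T j \boldsymbol{\delta}$ expands into the ten terms of $c^{[2]}$: four diagonal terms, and six cross terms with signs $-$ for pairs of like type (both creation or both annihilation) and $+$ for mixed pairs. One subtlety must be checked here. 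When indices coincide (e.g.\ $p=r$, or diagonal terms with $n^2=n$), the formula should still hold, because the identity above is exact as an operator statement. The terms where the monomial annihilates every state (e.g.\ $p\sigma = q\tau$) are zero on both sides, so coincidences only affect terms that vanish anyway.

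The main obstacle is bookkeeping rather than conceptual. I must keep the sign conventions straight ($e^{-iJ}\cdot e^{iJ}$ versus the reverse, matching the stated signs) and make sure the quadratic expansion reproduces the stated $c^{[2]}$ including the diagonal $j_{pp}$ contributions. I would verify this by working one representative ordering carefully and checking symmetry of $j$.

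The complexity claims then follow by counting. There are $O(N^2)$ one-body terms, each needing $O(1)$ work for $c$ and $O(N)$ work for $\boldsymbol{\phi}$, giving $O(N^3)$. There are $O(N^4)$ two-body terms, each needing $O(N)$ work, giving $O(N^5)$. The number of terms is unchanged, since each monomial maps to exactly one dressed monomial.
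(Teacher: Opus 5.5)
Your proof is correct, and it takes a more direct route than the paper's.

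\textbf{The paper's route.} The paper first dresses each ladder operator separately with the Hadamard series. It shows by induction that $[J,a^\dagger_{p\sigma}]_n=\hat{\Lambda}_{p\sigma}^n a^\dagger_{p\sigma}$, where $\hat{\Lambda}_{p\sigma}=2\sum_{r\lambda}j_{pr,\sigma\lambda}\hat{n}_{r\lambda}-j_{pp,\sigma\sigma}$. This gives $e^{-iJ}a^\dagger_{p\sigma}e^{iJ}=e^{-i\hat{\Lambda}_{p\sigma}}a^\dagger_{p\sigma}$, and the adjoint gives the annihilation operators. In a second step it moves the $e^{-i\hat{\Lambda}}$ factors to the right using $\hat{\Lambda}_{p\sigma}a^\dagger_{s\mu}=a^\dagger_{s\mu}(\hat{\Lambda}_{p\sigma}+2j_{ps,\sigma\mu})$ and collects the phases. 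The two-body case is only asserted to follow ``similarly.''

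\textbf{Your route.} You never expand a series. You move the whole $e^{-iJ}$ through the monomial with the shift identity for functions of $\boldsymbol{\hat{n}}$, then expand $J(\boldsymbol{n}+\boldsymbol{\delta})-J(\boldsymbol{n})$ as a polynomial identity. This gives the closed form $e^{-iJ}Oe^{iJ}=e^{-i\boldsymbol{\delta}^T j\boldsymbol{\delta}}\,O\,e^{-2i(j\boldsymbol{\delta})\cdot\boldsymbol{\hat{n}}}$ for any ladder monomial $O$ with net shift $\boldsymbol{\delta}$. Checking the index bookkeeping confirms that this reproduces $c^{[1]}$ and $\boldsymbol{\phi}^{[1]}$. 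It also reproduces all ten terms of $c^{[2]}$ (including $j_{rs,\sigma\tau}=j_{sr,\tau\sigma}$ by symmetry) and $\boldsymbol{\phi}^{[2]}$, with exactly the stated signs.

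\textbf{Comparison.} Your method actually derives the two-body formula that the paper leaves implicit. It also explains the sign pattern of $c^{[2]}$ as a quadratic form, and it needs neither the induction on nested commutators nor a separate phase-collection step. The paper's route mirrors the proof of Lemma~\ref{lemma:backprop-through-orbital-rotation}, since both dress individual ladder operators via the Hadamard formula. Its shift relation for $\hat{\Lambda}$ is the same ingredient you use, applied one level later.

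\textbf{One small correction.} Your remark that index coincidences ``only affect terms that vanish anyway'' is inaccurate. For example, $p=r$ gives nonvanishing monomials $a^\dagger_{p\sigma}a^\dagger_{q\tau}a_{s\tau}a_{p\sigma}$. You do not need that remark: the shift identity is exact for each ladder operator individually, and the expansion of $J(\boldsymbol{n}+\boldsymbol{\delta})-J(\boldsymbol{n})$ is an exact polynomial identity in commuting operators. So the formula holds for every index pattern, coincident or not.
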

A proof of Lemma~\ref{lemma:backprop-through-jastrow} is given in the Appendix. We remark again that the number of terms does not change in the backpropagated Hamiltonian, but rather each term accumulates a scalar and vector phase.

The final step of our algorithm is computing the energy
\begin{equation}
    E_{\text{UCJ1}} := \langle \psi | \H | \psi \rangle_{\text{UCJ1}} = \langle Q | \tilde{\H} | Q \rangle 
\end{equation}
where $|Q\rangle := e^{-K} | \phi_0 \rangle $ is a Slater determinant and $\tilde{\H}$ is the backpropagated Hamiltonian after steps (Lemmas)~\ref{lemma:backprop-through-orbital-rotation} and~\ref{lemma:backprop-through-jastrow}. Because of the accumulation of the vector phase in Lemma~\ref{lemma:backprop-through-jastrow}, each term in the one-body and two-body part is a matrix element between non-orthogonal Slater determinants, which is handled by L\"{o}wdin's formula~\cite{Lowdin_1955}.

\begin{lemma}{(L\"{o}wdin~\cite{Lowdin_1955}.)} \label{lemma:lowdin}
    Matrix elements between non-orthogonal Slater determinants can be evaluated via
    \begin{equation}
        \langle Q | a_p^\dagger a_q e^{i \boldsymbol{\phi} \cdot \boldsymbol{\hat{n}}} | Q \rangle = \det (S) \,  \underbrace{\left( D_{\boldsymbol{\phi}} Q S^{-1} Q^\dagger \right)_{qp}}_{\rho_{qp}}
    \end{equation}
    where
    \begin{equation}
        D_{\boldsymbol{\phi}} := \text{diag}\left( e^{i\phi_1}, ..., e^{i \phi_N} \right)
    \end{equation}
    is an $N \times N$ diagonal matrix,
    \begin{equation} \label{eqn:s-matrix}
        S := Q^\dagger D_{\boldsymbol{\phi}} Q
    \end{equation}
    is an $N_{\text{occ}} \times N_{\text{occ}}$ matrix, and $|Q\rangle$ is a Slater determinant of $N_{\rm occ}$ orthonormal orbitals, the columns of the $N\times N_{\text{occ}}$  matrix Q. Similarly, for the two-electron terms,
    \begin{equation} \label{eqn:two-body-energy}
        \langle Q | a_p^\dagger a_q^\dagger a_s a_r e^{i \boldsymbol{\phi} \cdot \boldsymbol{\hat{n}}} | Q \rangle = \det (S) \, \left( \rho_{rp} \rho_{sq} - \rho_{sp} \rho_{rq} \right) .
    \end{equation}
\end{lemma}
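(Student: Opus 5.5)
The plan is to reduce the lemma to the standard generalized Wick theorem for a pair of non-orthogonal determinants. The first step is to show that $e^{i\boldsymbol{\phi}\cdot\boldsymbol{\hat n}}|Q\rangle$ is itself a Slater determinant, with orbital matrix $Q' := D_{\boldsymbol{\phi}} Q$.

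\emph{Step 1: the phase gives a determinant.} Write $|Q\rangle = \prod_{k=1}^{N_{\rm occ}} \big(\sum_p Q_{pk} a_p^\dagger\big)|0\rangle$. The number operators commute with one another, and $[\hat n_r, a_p^\dagger] = \delta_{rp} a_p^\dagger$. Hence
\begin{equation}
e^{i\boldsymbol{\phi}\cdot\boldsymbol{\hat n}}\, a_p^\dagger\, e^{-i\boldsymbol{\phi}\cdot\boldsymbol{\hat n}} = e^{i\phi_p} a_p^\dagger .
\end{equation}
The vacuum satisfies $e^{i\boldsymbol{\phi}\cdot\boldsymbol{\hat n}}|0\rangle = |0\rangle$. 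Conjugating each creation operator in the product therefore gives $e^{i\boldsymbol{\phi}\cdot\boldsymbol{\hat n}}|Q\rangle = |D_{\boldsymbol{\phi}}Q\rangle = |Q'\rangle$. Spin is absorbed by indexing spin-orbitals $(p,\sigma)$ throughout, so $D_{\boldsymbol{\phi}}$ acts on the full spin-orbital space. The left-hand sides of the lemma are thus $\langle Q|a_p^\dagger a_q|Q'\rangle$ and $\langle Q|a_p^\dagger a_q^\dagger a_s a_r|Q'\rangle$.

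\emph{Step 2: overlap and biorthogonalization.} The product of creation operators is multilinear and antisymmetric in the orbital columns. Consequently, right-multiplying the orbital matrix by an invertible $M$ multiplies the state by $\det M$, and $\langle Q|Q'\rangle = \det(Q^\dagger Q') = \det S$ with $S$ as in~\eqref{eqn:s-matrix}. Assume first that $S$ is invertible. Set $Q'' := Q' S^{-1}$, so that $|Q'\rangle = \det(S)\,|Q''\rangle$ and $Q^\dagger Q'' = I$. The problem then reduces to proving $\langle Q|a_p^\dagger a_q|Q''\rangle = (Q''Q^\dagger)_{qp} = \rho_{qp}$, together with the two-body analogue.

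\emph{Step 3: Laplace expansion.} For the one-body term, expand
\begin{equation}
a_q|Q''\rangle = \sum_k (-1)^{k-1} Q''_{qk}\,|Q''_{\hat k}\rangle
\end{equation}
by anticommuting $a_q$ through the product. Expand $a_p|Q\rangle$ the same way. The overlap $\langle Q_{\hat l}|Q''_{\hat k}\rangle$ is the $(l,k)$ minor of $Q^\dagger Q'' = I$, which equals $\delta_{lk}$. The sum then collapses to $\sum_k Q^*_{pk}Q''_{qk} = (Q''Q^\dagger)_{qp}$.

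For the two-body term, apply $a_s a_r$ to $|Q''\rangle$ and $a_q a_p$ to $|Q\rangle$, which removes two columns from each determinant. The relevant minors of the identity are $\delta_{k_1l_1}\delta_{k_2l_2} - \delta_{k_1l_2}\delta_{k_2l_1}$. This yields the antisymmetrized product $\rho_{rp}\rho_{sq} - \rho_{sp}\rho_{rq}$. The sign convention should be fixed by checking the orthogonal case $\boldsymbol{\phi}=0$, where this reduces to the usual Hartree--Fock two-body Wick contraction.

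The main obstacle I expect is the singular case $\det S = 0$, where $S^{-1}$ does not exist. I would handle it by noting that $\det(S)\,S^{-1} = \operatorname{adj}(S)$. The right-hand sides are therefore polynomials in the entries of $Q'$, once the two-body expression is rewritten as $\det(S)^{-1}$ times a combination of adjugate minors; equivalently, one can use Jacobi's identity for $2\times 2$ minors of the adjugate. The left-hand sides are polynomials in the entries of $Q'$ as well. The identity then extends from the dense set of $\boldsymbol{\phi}$ with $\det S \neq 0$ by continuity. In the algorithm the same adjugate form would be used numerically whenever $S$ is close to singular.
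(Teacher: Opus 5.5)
Your proposal is correct and follows the paper's route. The paper likewise treats $e^{i\boldsymbol{\phi}\cdot\boldsymbol{\hat n}}|Q\rangle$ as the non-orthogonal determinant $|D_{\boldsymbol{\phi}}Q\rangle$ and then simply cites L\"{o}wdin's formula, so your biorthogonalization and Laplace-expansion argument supplies the derivation the paper leaves out; the two-body signs you defer to the $\boldsymbol{\phi}=0$ check in fact come out directly as $\rho_{rp}\rho_{sq}-\rho_{sp}\rho_{rq}$ from the antisymmetry of the expansion signs. Your adjugate and continuity treatment of $\det S=0$ goes beyond the lemma as stated and differs from the paper's SVD-based generalized Wick fallback (Burton), but it is also valid, since $\det S$ is a polynomial in the $e^{i\phi_j}$ equal to $1$ at $\boldsymbol{\phi}=0$ and is therefore nonzero on a dense set.
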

Computationally, we form $S$ in $O(N^3)$ time and then compute its LU factorization in $O(N^3)$ time This allows the evaluation of $S^{-1} Q^\dagger$ in $O(N^3)$ time, and also provides $\det S$ in $O(N)$ time as the product of diagonal entries. For the one body terms, matrix elements $\rho_{qp}$ can be evaluated in $O(N^2)$ time via a triangular solve and dot product once the LU factorization of $S$ is performed ($O(N^3)$ time). Since there are $O(N^2)$ one body terms, we require $O(N^5)$ time to compute the energy of all one body terms. For the two body terms, the analysis is similar, and with $O(N^4)$ total two body terms we require $O(N^7)$ time. 

Although we never observe the matrix $S$~\eqref{eqn:s-matrix} to be singular in practice, in principle this is possible. A simple example illustrating this is $Q = [1 \ 1]^T / \sqrt{2}$ and $D_\phi = \text{diag}(1, -1)$. In the case that $S$ is singular, we can replace L\"{o}wdin's formula with the generalized non-orthogonal Wick's theorem, expressed in second quantization by Burton~\cite{Burton_2021} building on the generalized Slater–Condon rules. In this case, we compute a singular value decomposition (SVD) of $S$ and remove orbitals from $Q$ and phases from $D_\phi$ that correspond to zero singular values. Since the SVD costs $O(N^3)$ time, this has the same asymptotic runtime of $O(N^7)$. Because L\"{o}wdin's formula is somewhat simpler, we state the final step for energy computation in Lemma~\ref{lemma:lowdin} using this, but for the singular case we can use this procedure which returns the same energy and does not change the algorithm's runtime. In our numerical results below, we have tested both approaches and achieve the same energy results in all examples.

\begin{table}
    \centering
    \begin{tabular}{|c|c|} \hline
        \textbf{Algorithm}           & \textbf{Energy (Ha)} \\ \hline 
        Truncated Hex LUCJ-1.5\footnote{Identical to the circuit run in~\cite{Robledo2025}.}        & -323.051                    \\ \hline 
        % \textcolor{red!70!black}{Truncated Hex LUCJ-1.5 (Raw)} & -324.596              \\ \hline
        % \textcolor{red!70!black}{Truncated Hex LUCJ-1.5 + QED} &                -324.920 \\ \hline
        Hartree-Fock                 & -326.547                \\ \hline 
        Truncated Hex LUCJ-1             & -326.553       \\ \hline
        Hex LUCJ-1                       & -326.553 \\ \hline
        Square LUCJ-1               & -326.556 \\ \hline 
        UCJ-1                        & -326.559              \\ 
        \hline
        \textcolor{red!70!black}{Truncated Hex LUCJ-1.5 + SQD}~\cite{Robledo2025} & -326.645              \\ \hline
        Optimized UCJ-1  &  -326.796    \\ \hline 
        CCSD\footnote{Unconverged.}  & -326.828     \\ \hline 
        DMRG~\cite{Robledo2025}  & -327.239     \\ \hline 
        
    \end{tabular}
    \caption{Ground state energy calculation of the $n = 72$ qubit iron sulfur cluster computed by various algorithms, ordered from least negative (top row) to most negative (bottom row). The algorithm in \textcolor{red!70!black}{red} uses quantum hardware samples and SQD post-processing~\cite{Robledo2025}. By optimizing the parameters of a one layer UCJ circuit, we obtain a lower energy than experimentally achieved in~\cite{Robledo2025}.}
    \label{tab:iron-sulfur-results}
\end{table}

\begin{figure}[htbp]
    \centering
    \includegraphics[width=\linewidth]{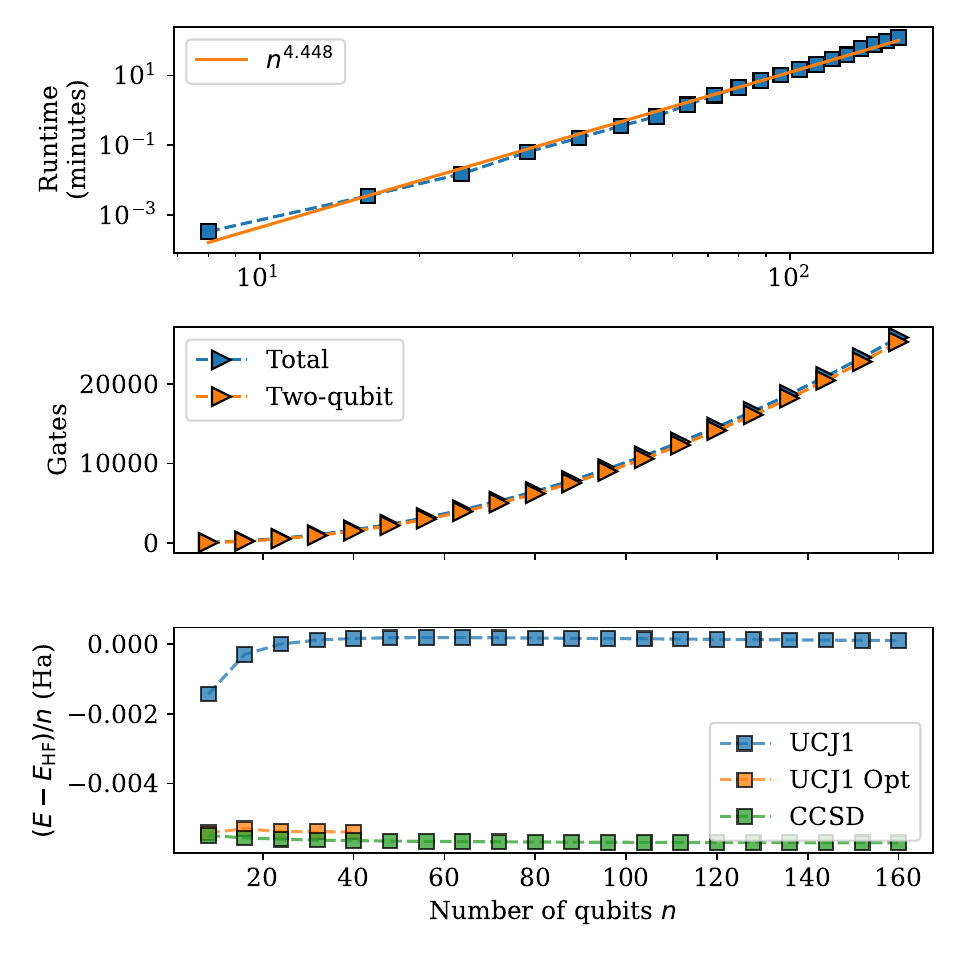}
    \caption{Algorithm runtime, circuit statistics, and energies for simulating the single-layer UCJ (UCJ1) ansatz for Hydrogen chains up to $80$ atoms ($160$ qubits). For visualization, we subtract the HF energy and normalize by the number of qubits $n$. As can be seen, the UCJ1 ansatz, initialized from CCSD parameters, closely follows Hartree-Fock energies. The maximum difference (in magnitude) between HF and UCJ1 is 16 mHa, and the average difference (in magnitude) is 12 mHa with a standard deviation of 5 mHa. We also optimize the parameters of the UCJ1 ansatz up to $40$ qubits and obtain comparable energies to CCSD.}
    \label{fig:hchains}
\end{figure}

\textit{Results} --- % \label{sec:simulation-of-experiments}
We first apply our algorithm to computing the ground state energy of the $n = 72$ qubit iron sulfur cluster experiment from~\cite{Robledo2025}. Energy results are shown in Table~\ref{tab:iron-sulfur-results}. On a 2025 Macbook Pro M5 with 16GB memory, we are able to classically simulate the same circuit, and variants, in less than a minute. Note that our algorithm applies to the extra ``half layer'' added to the ansatz in the experiment which only affects the final orbital rotations and does not change our algorithm. Additionally, we are able to simulate the full UCJ circuit with all Jastrow interactions.  The best energy obtained from quantum hardware data is $-326.645$ and comes from SQD run for 1.44 hours on 6400 nodes of Fugaku (Fig. S12 of~\cite{Robledo2025}). The best energy we obtain is $-326.796$ Ha and comes optimizing the single-layer UCJ circuit with L-BFGS-B~\cite{Liu_Nocedal_1989} for 2367 iterations. On an NVIDIA GH200 Grace Hopper node, energy evaluation takes around seven seconds, and the total runtime for the optimization was three hours.

% -Diagonalization over valid bit strings:-324.919647 Ha
% - 1647/3,163,742 - valid bit strings count
% - Diagonalization over all bit strings by taking 50,000 samples (uniform sampling)and projecting matrix over it and find lowest e.v, repeating the same 10 times and avg value is:-324.595799 Ha

Next, we apply our algorithm to computing the ground state energy of Hydrogen chains, a common molecular system for studying scaling, on up to 160 qubits --- the size of the current largest quantum computers available from IBM. Figure~\ref{fig:hchains} shows the runtime and energies, as well as the number of gates in the circuit. The best fit to the runtime in this case is $N^{4.448}$, which is smaller than the worst case $O(N^7)$ due to memoization (caching) of computed matrix elements. The energies obtained from the single-layer UCJ ansatz initialized from the CCSD parameters closely match the Hartree-Fock energy. The maximum difference (in magnitude) between HF and UCJ1 is 16 mHa, and the average difference (in magnitude) is 11 mHa with a standard deviation of 5 mHa. We also optimize circuit parameters using 1000 iterations of the L-BFGS-B optimizer up to $n = 40$ qubits, and obtain energies within a few mHa of CCSD, shown in Fig.~\ref{fig:hchains}.

\textit{Conclusion} ---
In this Letter, we introduced an exact, polynomial time classical algorithm for calculating the energy of any single-layer unitary cluster Jastrow (UCJ) circuit, which many recent large-scale quantum computations in chemistry have executed. Notably, we showed our algorithm can simulate the largest experiment of~\cite{Robledo2025} in less than a minute on a laptop. It is worth stating that our algorithm computes the energy instead of sampling bitstrings from the quantum circuit, a distinction which is often referred to in this context as weak simulation (computing energy or more generally observables) vs. strong simulation (sampling bitstrings). This distinction is important for interpreting our work in light of the previously mentioned hardness analysis of sampling from UCJ circuits~\cite{Hafid_Iwakiri_Tsubouchi_Yoshioka_Kohda_2025}. In strong simulation, can use sampled bitstrings to perform other tasks, and this is done in the experiment of~\cite{Robledo2025}; namely, through the sample-based quantum diagonalization (SQD) post-processing. This post-processing helps to remove errors from the quantum computer and improve the energy estimate. Our algorithm does not have the capability of sampling bitstrings, and so this post-processing is not possible. Nonetheless, we showed that circuit optimization, enabled by fast energy calculation, lowered the energy further than the SQD post-processing.

Our work demonstrates that $L \ge 2$ layers of the (L)UCJ ansatz is necessary to achieve any beyond-classical computation. Our algorithm does not hold for $L \ge 2$  layers because orbital rotations between Jastrow operations perturb the diagonal structure that enables L\"{o}wdin's formula for calculating the energy. It is unlikely that there exists any efficient classical algorithm for this case, as layers of alternating operators are well-known to be universal~\cite{Lloyd_1995}, and in particular Givens rotations and CZ gates are known to be universal for quantum chemistry~\cite{Leimkuhler_Whaley_2026}. Although the number of operations required for $L \ge 2$ layers on current quantum computers is challenging (which is a main reason why many prior experiments have used a single layer) our work shows that this will be necessary, but perhaps not sufficient, to achieve quantum advantage in chemistry. 
% In future work, it will be interesting to see if approximate Gaussian rank frameworks can classically simulate such UCJ circuits, albeit by paying an exponential cost in the number of non-Gaussian unitaries.

\textit{Code and data availability} ---
An implementation of our algorithm and script for simulating the iron sulfur cluster circuit from~\cite{Robledo2025} is available at~\cite{simulate_ucj1_github}.

\textit{Acknowledgments} ---
This material is based upon work supported by the National Science Foundation under Grant No. 2349002.
SD would like to acknowledge the financial support from Anusandhan National Research Foundation (ANRF) under the MATRICS scheme (Grant No.
[ANRF/ARGM/2025/002511/TS)]) and National Quantum
Mission under Quantum Algorithms Technical Group (TPN
No.: 136428). The work of SD has also been supported by
the International Center for Theoretical Sciences (ICTS), Bengaluru through its Associateship Programme. 
This collaborative study is supported by the Ministry of Education, Government of India, through the SPARC program (Project Code: SPARC/2025-2026/P4086). We acknowledge the use of generative AI (Claude Fable 5) for suggesting L\"{o}wdin's formula for computing the energy (Lemma~\ref{lemma:lowdin}). This work was supported in part through computational resources and services provided by the Institute for Cyber-Enabled Research at Michigan State University.

\bibliographystyle{apsrev4-1}
\bibliography{refs}

\appendix

\section{Proof of Lemma~\ref{lemma:backprop-through-orbital-rotation}} \label{sec:proof-lemma-1}

\begin{proof}
    
    We want to compute
    \begin{equation}
        \H \mapsto e^{-K_\alpha} e^{-K_\beta} \H e^{K_\alpha} e^{K_\beta} .
    \end{equation}
    For simplicity let us consider one spin sector and, for brevity, drop the spin variable so that we have
    \begin{equation} \label{eqn:orbital-rotation-conjugation-to-evaluate}
        \H \mapsto e^{-K} \H e^{K} % = E_0 + \sum_{pq} h_{pq} e^{-K} a_p^\dagger a_q e^K + \frac{1}{2} \sum_{pqrs} g_{pqrs} e^{-K} a_p^\dagger a_q^\dagger a_s a_r e^{K} .
    \end{equation}
    with
    \begin{equation}
        K = \sum_{pq} k_{pq} a_p^\dagger a_q .
    \end{equation}
    Note that this is the same as~\eqref{eqn:orbital-rotation-K_mu} with the spin index $\sigma$ dropped and the final layer index $\mu = L$ suppressed for brevity.
    By the Campbell identity/Hadamard formula,
    \begin{equation} \label{eqn:campbell-identity-hadamard-formula}
        e^X Y e^{-X} = \sum_{n = 0}^{\infty} \frac{1}{n!} [X, Y]_n
    \end{equation}
    where
    \begin{equation}
    [X,Y]_n := \underbrace{[X,\cdots[X,[X}_{n \text{ times}},Y]]\cdots], \quad [X,Y]_0 := Y ,
    \end{equation}
    we can evaluate~\eqref{eqn:orbital-rotation-conjugation-to-evaluate} by evaluating nested commutators. Consider first the application of the Campbell identity to a single creation operator in $K$, i.e.
    \begin{equation} \label{eqn:conjugation-of-creation-operator-by-exp-K}
        e^{-K} a_p^\dagger e^{K} = \sum_{n = 0}^{\infty} \frac{1}{n!} [-K, a_p^\dagger]_n
        = \sum_{n = 0}^{\infty} \frac{(-1)^n}{n!} [K, a_p^\dagger]_n.
    \end{equation}
    To evaluate $[K, a_p^\dagger]_n$, we have for the base case $n = 1$ that
    \begin{equation} \label{eqn:commutator-of-K-with-creation-base-case}
        [K, a_p^\dagger] = \sum_{rs} k_{rs} [a_r^\dagger a_s, a_p^\dagger] .
    \end{equation}
    Using the commutator-anticommutator identity 
    \begin{equation} \label{eqn:commutator-anticommutator-identity}
        [AB, C] = A\{B, C\} - \{A, C\}B ,
    \end{equation}
    we have
    \begin{equation} \label{eqn:inner-commutator-derivation}
        [a_r^\dagger a_s, a_p^\dagger] = a_r^\dagger \{a_s, a_p^\dagger\} - \{a_r^\dagger, a_p^\dagger\} a_s = \delta_{sp} a_r^\dagger
    \end{equation}
    where in the second equality we have used the canonical commutation relations
    \begin{equation}
        \{a_s, a_p^\dagger\} = \delta_{sp}, \qquad \{a_s, a_p\} = \{a_s^\dagger, a_p^\dagger\} = 0 .
    \end{equation}
    Inserting~\eqref{eqn:inner-commutator-derivation} into~\eqref{eqn:commutator-of-K-with-creation-base-case}, we see that
    \begin{equation}
        [K, a_p^\dagger] = \sum_{r} k_{rp} a_r^\dagger = \sum_r (k^T)_{pr} a_r^\dagger = [k^T \boldsymbol{a}^\dagger]_p ,
    \end{equation}
    where in the last step we have written the expression as the $p$th element of the matrix-vector product $k^T \boldsymbol{a}^\dagger$. Now,
    by repeated application, it follows that
    \begin{equation}
        [K, a_p^\dagger]_n = \sum_r [k^n]_{rp} a_r^\dagger ,
    \end{equation}
    and so from~\eqref{eqn:conjugation-of-creation-operator-by-exp-K}, we have that
    \begin{equation}
        e^{-K} a_p^\dagger e^{K} = \sum_{n = 0}^\infty \frac{(-1)^{n}}{n!} \sum_r k_{rp}^n a_r^\dagger = \sum_r \left( \sum_{n = 0}^\infty \frac{(-1)^{n}}{n!} [k^n]_{rp} \right) a_r^\dagger %\left[ e^{-k^T} \boldsymbol{a}^\dagger \right]_p .
    \end{equation}
    We define
    \begin{equation}
        u_{rp} := \sum_{n = 0}^\infty \frac{(-1)^{n}}{n!} [k^n]_{rp} ,
    \end{equation}
    or as a matrix
    \begin{equation}
        u := e^{-k},
    \end{equation}
    so that we can write
    \begin{equation} \label{eqn:conjugation-of-creation-operator-by-orbital-rotation}
        e^{-K} a_p^\dagger e^{K} = \sum_r u_{rp} a_r^\dagger .
    \end{equation}
    By taking the hermitian conjugate of this, we see that
    \begin{equation} \label{eqn:conjugation-of-annihilation-operator-by-orbital-rotation}
        e^{-K} a_q e^{K} = \sum_s u_{sq}^* a_s .
    \end{equation}
    Thus, the evolution of the one-electron term in~\eqref{eqn:orbital-rotation-conjugation-to-evaluate} becomes
    \begin{align}
        e^{-K} \left( \sum_{pq} h_{pq} a_p^\dagger a_q \right) e^K &= \sum_{pq} h_{pq} e^{-K} a_p^\dagger a_q e^{K} \\
        &= \sum_{pq} h_{pq} \left( e^{-K} a_p^\dagger e^K \right) \left( e^{-K} a_q e^{K} \right) \\
        &= \sum_{pq} h_{pq} \left( \sum_r u_{rp} a_r^\dagger \right) \left( \sum_s u_{sq}^* a_s \right) \\
        &= \sum_{rs} \left( \sum_{pq} u_{rp} h_{pq} u_{sq}^* \right) a_r^\dagger a_s \\
        &= \sum_{rs} \tilde{h}_{rs} a_r^\dagger a_s .
    \end{align}
    In other words, to apply the orbital rotation to the Hamiltonian, we simply need to update one-body integrals $h_{pq}$ via
    \begin{equation}
        \tilde{h}_{rs} = \sum_{pq} u_{rp} h_{pq} u_{sq}^*
    \end{equation}
    which is exactly~\eqref{eqn:one-electron-integrals-update-through-orbital-rotation}, 
    or as a matrix equation
    \begin{equation}
        \tilde{h} = u h u^\dagger .
    \end{equation}
    Thus, the one-body integrals can be updated exactly in time $O(N^3)$. 

    The two-body integrals are handled similarly. Writing
    \begin{align}
        e^{-K} \left( \frac{1}{2} \sum_{pqrs} g_{pqrs}\, a_p^\dagger a_q^\dagger a_s a_r \right) e^{K}
        &= \frac{1}{2} \sum_{pqrs} g_{pqrs}\, \tilde a_p^\dagger\, \tilde a_q^\dagger\, \tilde a_s\, \tilde a_r
    \end{align}
    where the tilde denotes conjugation by $e^{-K}$ --- e.g., $\tilde a_p^\dagger \equiv e^{-K} a_p^\dagger e^K$. Using~\eqref{eqn:conjugation-of-creation-operator-by-orbital-rotation} and~\eqref{eqn:conjugation-of-annihilation-operator-by-orbital-rotation}, we arrive at the updated two-body integrals~\eqref{eqn:two-body-integrals-update-through-orbital-rotation}, which can be computed in $O(N^5)$ time by contracting one index at a time.
    
\end{proof}

\section{Proof of Lemma~\ref{lemma:backprop-through-jastrow}} \label{sec:proof-lemma-2}

\begin{proof}
    To evaluate
    \begin{equation}
        \mathcal{H} \mapsto e^{-i J} \mathcal{H} e^{iJ}, 
    \end{equation}
    we again use the Campbell identity/Hadamard formula~\eqref{eqn:campbell-identity-hadamard-formula}. Starting with the conjugation of a creation operator, we have
    \begin{equation} \label{eqn:temp-number-operator-creation-operator-commutator}
        [\hat{n}_{r\lambda}, a_{p \sigma}^\dagger] = [a_{r\lambda}^\dagger a_{r \lambda}, a_{p \sigma}^\dagger] = \delta_{rp} \delta_{\lambda \sigma} a_{p \sigma}^\dagger 
    \end{equation}
    where the last equality follows from the commutator-anticommutator identity~\eqref{eqn:commutator-anticommutator-identity} and canonical commutation relations. It follows that
    \begin{equation}
        [J, a_{p \sigma}^\dagger] = \sum_{rs, \lambda \tau} j_{rs, \lambda \tau} [\hat{n}_{r \lambda} \hat{n}_{s \tau}, a_{p \sigma}^\dagger ] .
    \end{equation}
    We evaluate the inner commutator similarly by expanding and using~\eqref{eqn:temp-number-operator-creation-operator-commutator} to obtain
    \begin{equation}
        \scriptsize
        [\hat{n}_{r\lambda} \hat{n}_{s\tau}, a_{p\sigma}^\dagger] = \delta_{sp}\delta_{\tau\sigma}\, \hat{n}_{r\lambda} a_{p\sigma}^\dagger + \delta_{rp}\delta_{\lambda\sigma}\, \hat{n}_{s\tau} a_{p\sigma}^\dagger - \delta_{rp}\delta_{\lambda\sigma}\delta_{sp}\delta_{\tau\sigma}\, a_{p\sigma}^\dagger .
    \end{equation}
    Inserting back into the summation and using the fact that $j$ is symmetric, we have
    \begin{equation}
        [J, a_{p \sigma}^\dagger] = \hat{\Lambda}_{p \sigma} a_{p \sigma}^\dagger
    \end{equation}
    where $\hat{\Lambda}_{p \sigma}$ is the operator defined via
    \begin{equation} \label{eqn:Lambda}
        \hat{\Lambda}_{p \sigma} := 2 \sum_{r \lambda} j_{pr, \sigma \lambda} \hat{n}_{r \lambda} - j_{pp, \sigma \sigma} .
    \end{equation}
    It follows by induction that the nested commutator is
    \begin{equation}
        [J, a_{p \sigma}^\dagger]_n = \hat{\Lambda}_{p \sigma}^n a_{p \sigma}^\dagger
    \end{equation}
    so that
    \begin{align}
        e^{-iJ} a_{p \sigma}^\dagger e^{iJ} &= \sum_{n = 0}^{\infty} \frac{(-i)^n}{n!} [J, a_{p \sigma}^\dagger]_n \\
        &= \sum_{n = 0}^{\infty} \frac{(-i)^n}{n!} \hat{\Lambda}_{p \sigma}^n a_{p \sigma}^\dagger \\
        &= e^{-i \hat{\Lambda}_{p \sigma}}  a_{p \sigma}^\dagger .
    \end{align}
    Now, taking the hermitian conjugate, we immediately have
    \begin{equation}
        e^{-iJ} a_{q \tau} e^{iJ} =  a_{q \tau} e^{i \hat{\Lambda}_{q \tau}} .
    \end{equation}
    Thus, the evolution of the one-electron integrals in~\eqref{eqn:orbital-rotation-conjugation-to-evaluate} becomes
    \begin{align}
        e^{-iJ} \sum_{pq,\sigma} h_{pq} a_{p \sigma}^\dagger a_{q \sigma} e^{iJ} &= \sum_{pq, \sigma} h_{pq} \left( e^{-iJ} a_{p \sigma}^\dagger e^{iJ} \right) \left( e^{-iJ} a_{q \sigma} e^{iJ} \right) \\
        &= \sum_{pq, \sigma} h_{pq} e^{-i \hat{\Lambda}_{p \sigma}}  a_{p \sigma}^\dagger a_{q \sigma} e^{i \hat{\Lambda}_{q \sigma}} \label{eqn:temp-one-body-integrals-conjugation-by-jastrow}
    \end{align}
    From~\eqref{eqn:Lambda} and the commutator~\eqref{eqn:temp-number-operator-creation-operator-commutator}, we have that
    \begin{equation}
        \hat{\Lambda}_{p\sigma} a_{s \mu}^\dagger = a_{s \mu}^\dagger (\hat{\Lambda}_{p\sigma} + 2 j_{ps, \sigma \mu})
    \end{equation}
    and thus
    \begin{equation}
        e^{-i\hat{\Lambda}_{p\sigma}} a_{s \mu}^\dagger = a_{s \mu}^\dagger e^{-i(\hat{\Lambda}_{p\sigma} + 2 j_{ps, \sigma \mu})} .
    \end{equation}
    By similar reasoning,
    \begin{equation}
        e^{-i\hat{\Lambda}_{p\sigma}} a_{s \mu} = a_{s \mu} e^{-i(\hat{\Lambda}_{p\sigma} - 2 j_{ps, \sigma \mu})} .
    \end{equation}
    We apply these to push the exponential of $e^{-i\hat{\Lambda}_{p\sigma}}$ to the right of the creation and annihilation operators in~\eqref{eqn:temp-one-body-integrals-conjugation-by-jastrow}. When we combine with the $e^{i\hat{\Lambda}_{q\sigma}}$ on the right, we have
    \begin{equation}
        e^{-i \hat{\Lambda}_{p \sigma}}  a_{p \sigma}^\dagger a_{q \sigma} e^{i \hat{\Lambda}_{q \sigma}} = a_{p \sigma}^\dagger a_{q \sigma} e^{i(\hat{\Lambda}_{q\sigma} - \hat{\Lambda}_{p\sigma} - 2 j_{pp, \sigma \sigma} + 2 j_{pq, \sigma \sigma})} .
    \end{equation}
    Using the definition of $\hat{\Lambda}$~\eqref{eqn:Lambda}, we can write
    \begin{equation}
        e^{-i \hat{\Lambda}_{p \sigma}}  a_{p \sigma}^\dagger a_{q \sigma} e^{i \hat{\Lambda}_{q \sigma}} = e^{i c_{p q \sigma}^{[1]}} a_{p \sigma}^\dagger a_{q \sigma} e^{i \boldsymbol{\phi}_{p q \sigma }^{[1]} \cdot \boldsymbol{\hat{n}}}
    \end{equation}
    where we define the scalar
    \begin{equation}
        c_{pq\sigma}^{[1]} := 2 j_{pq, \sigma \sigma} -j_{pp, \sigma\sigma} - j_{qq, \sigma \sigma} 
    \end{equation}
    and the operator
    \begin{equation}
        \boldsymbol{\phi}_{p q \sigma}^{[1]} \cdot \boldsymbol{\hat{n}} := \sum_{r\lambda} \phi_{r \lambda }^{(p q \sigma)} \hat{n}_{r \lambda}
    \end{equation}
    with
    \begin{equation}
        \phi_{r \lambda }^{(p q \sigma)} := 2 \left( j_{qr, \sigma \lambda} - j_{pr, \sigma \lambda} \right)
    \end{equation}
    The update to the two-body part of the Hamiltonian~\eqref{eqn:two-body-electron-update-after-backpropagating-through-Jastrow-operator} can be derived similarly.
\end{proof}

\section{Hardness of other simulation algorithms for single-layer UCJ circuits}

\subsection{Matchgate simulation}

As mentioned in the main text, matchgate + CPHASE simulation has been implemented for LUCJ circuits~\cite{Hassman_ReardonSmith_Ravi_Chong_Sung_2025}. To our knowledge, the largest matchgate + CPHASE simulation of LUCJ circuits comes from this reference and is for $n = 52$ qubit circuit with $57$ CPHASE gates.
This simulation technique is best suited for LUCJ circuits which truncate CPHASE gates from the Jastrow operation. UCJ circuits would be significantly more challenging. For example, the full UCJ circuit for the $n = 72$ qubit iron sulfur cluster from~\cite{Robledo2025} contains 2556 CPHASE gates, which is well beyond the capabilities of this simulation technique. Generally, the $O(N)$ scaling and $O(N^2)$ scaling of the number of gates in the LUCJ and UCJ ansatzes, respectively, make matchgate + CPHASE simulation intractable.

\subsection{Heisenberg simulation (Observable propagation)}

To estimate an expectation value
\begin{equation} \label{eqn:expectation-value}
    \mathbb{E}_{\rho, U} [ O ] := \Tr[U \rho U^\dagger O]
\end{equation}
Heisenberg simulators write
\begin{equation}
    \mathbb{E}_{\rho, U} [ O ] = \Tr[ \rho \left( U^\dagger O U \right)]
\end{equation}
and apply the unitary $U$ to the observable $O$. We consider $U$ to be a quantum circuit composed of $N$ gates $U = U_N U_{N - 1} \cdots U_1$, and so the task of Heisenberg simulation is to \textit{propagate} the observable
\begin{equation}
    O \mapsto U_1 ^\dagger O U_1 \mapsto U_N^\dagger \cdots U_1^\dagger O U_1 \cdots U_N .
\end{equation}

Pauli propagation is Heisenberg evolution (observable propagation) with the observable $O$ expressed in the Pauli basis
\begin{equation}
    O = \sum_i c_i P_i 
\end{equation}
where $P_i$ are $n$-qubit Pauli operators of the form
\begin{equation}
    P = s X_1^{x_1} X_2^{x_2} \cdots X_n^{x_n} Z_1^{z_1} Z_2^{z_2} \cdots Z_n^{z_n}
\end{equation} 

Majorana propagation is Heisenberg evolution (observable propagation) with the observable $O$ expressed in the Majorana basis
\begin{equation}
    O = \sum_i c_i M_i
\end{equation}
where $M_i$ are $n$-qubit Majorana operators (often called Majorana monomials in this context) of the form
\begin{equation}
    M = s \gamma_1^{b_1} \gamma_2^{b_2} \cdots \gamma_{2n - 1}^{b_{2n - 1}} \gamma_{2n}^{b_{2n}} .
\end{equation}
Here, $s \in \mathbb{C}$ is a phase, the Majorana operators are defined by
\begin{align} \label{eqn:majorana-operators-odd}
    \gamma_{2j - 1} &:= \left( \prod_{k = 1}^{j - 1} Z_k \right) X_j \\
    \gamma_{2j} &:= \left( \prod_{k = 1}^{j - 1} Z_k \right) Y_j , \label{eqn:majorana-operators-even}
\end{align}
and $b_i \in \{0, 1\}$. 

We have implemented Pauli and Majorana propagation algorithms for the (L)UCJ circuits we study in this work. In general, we find that these algorithms require a truncation that is too large to obtain chemically relevant results (e.g., better than Hartree-Fock). For smaller truncation, the number of terms and runtime grows exponentially, presenting a significant challenge for these algorithms.

\begin{figure}[htbp]
    \centering
    \includegraphics[width=\linewidth]{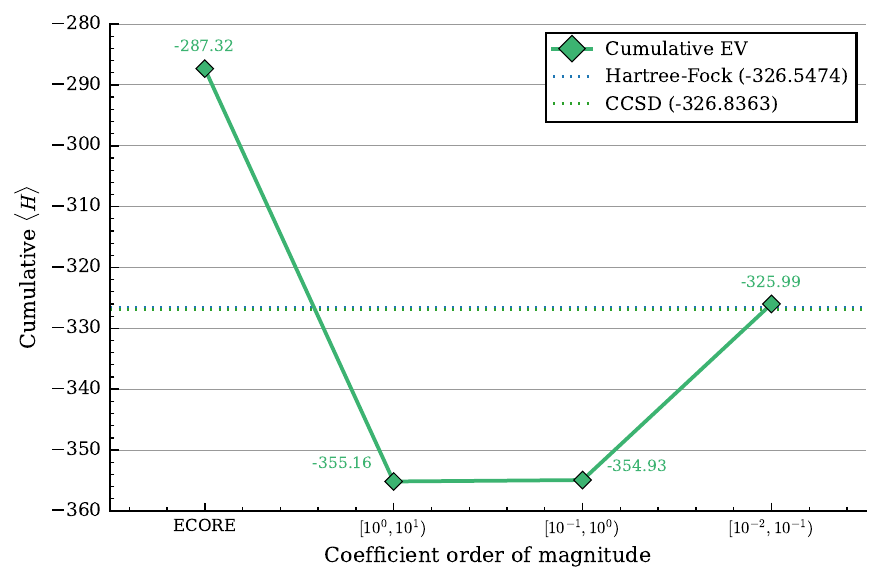}
    \caption{Cumulative energy of Majorana propagation for the $n = 72$ qubit iron sulfur cluster Hamiltonian of~\cite{Robledo2025} through a 1 layer LUCJ ansatz with heavy-hex connectivity. Here, energy is computed by Majorana backpropagation with a cutoff of $10^{-6}$ for terms partitioned by coefficient magnitude (see Fig.~\ref{fig:hamiltonian}). The final cumulative energy is not yet converged, and over $500$ mHa from the Hartree-Fock energy, demonstrating the challenge of this approach. More accurate simulations (larger cutoffs) are prohibitively expensive, as the number of terms grows exponentially with the number of gates in the circuit, and truncation is used as a heuristic.}
    \label{fig:majorana}
\end{figure}

\begin{figure}[htbp]
    \centering
    \includegraphics[width=\linewidth]{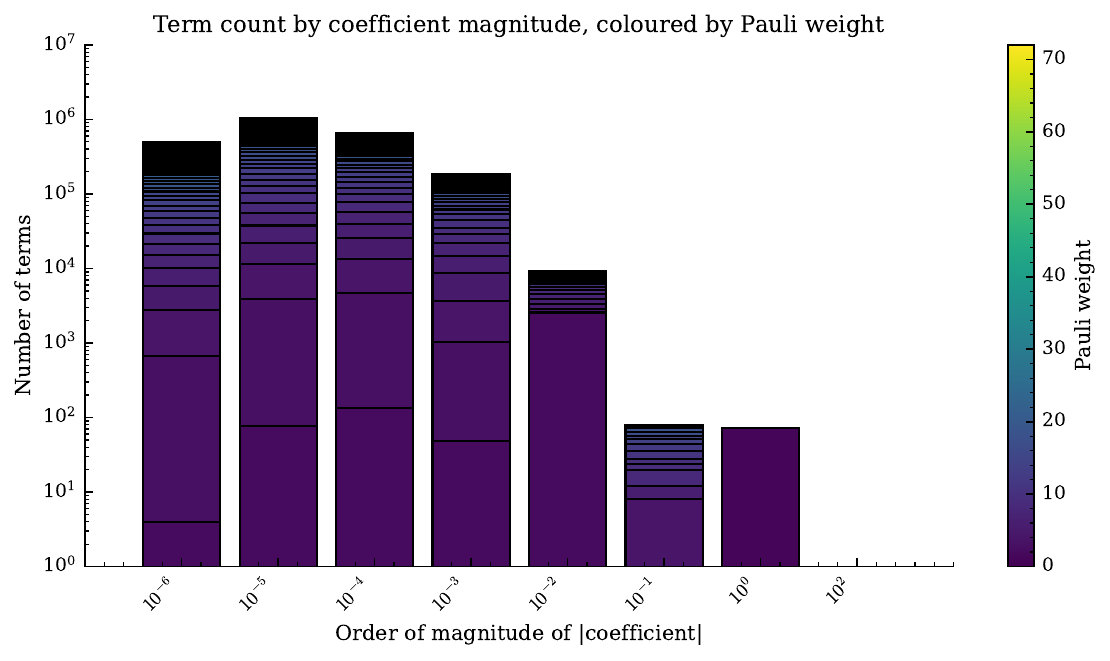}
    \caption{Partitioning of the 72 qubit iron sulfur cluster Hamiltonian from~\cite{Robledo2025}, showing the number of terms in the Hamiltonian vs coefficient magnitude, colored by Pauli weight.}
    \label{fig:hamiltonian}
\end{figure}

An example is shown in Fig.~\ref{fig:majorana}. Here, we compute the energy by mapping fermionic operators to Majorana operators and backpropagating these through the circuit. The Hamiltonian contains over 2.5M terms in the Majorana basis, which presents a significant practical overhead. To deal with this, we partition the Hamiltonian into sectors based on coefficient magnitude and/or term weight (Pauli weight). Figure~\ref{fig:hamiltonian} shows the Hamiltonian partitioned into sectors based on coefficient magnitude, and colored by term weight (Pauli weight). We see that many terms have low weight, and most terms have small coefficient magnitude. Recognizing this, we backpropagate Majorana terms with the largest coefficient magnitude from $10^{-2}$ to $10^{2}$, using a cutoff in the simulation of $10^{-6}$. That is, Majorana terms with coefficient magnitude smaller than $10^{-6}$ are discarded in the backpropagation. The cumulative energy vs coefficient magnitude order is shown in Fig.\ref{fig:majorana}. Here, we observe a large variance in cumulative energy vs. coefficient magnitude, and the plot is not yet converged. Computing more terms until convergence would take significantly more computational resources beyond our current capabilities, and even then there is no guarantee that the cutoff of $10^{-6}$ will lead to chemically accurate energies. After the $10^{-1}$ coefficients computed in Fig.~\ref{fig:majorana}, the energy is not yet converged and over $500$ mHa from the Hartree-Fock energy. We observe similar results if we use the Pauli basis instead of the Majorana basis.

\subsection{Tensor networks}

\begin{figure}
    \centering
    \includegraphics[width=\linewidth]{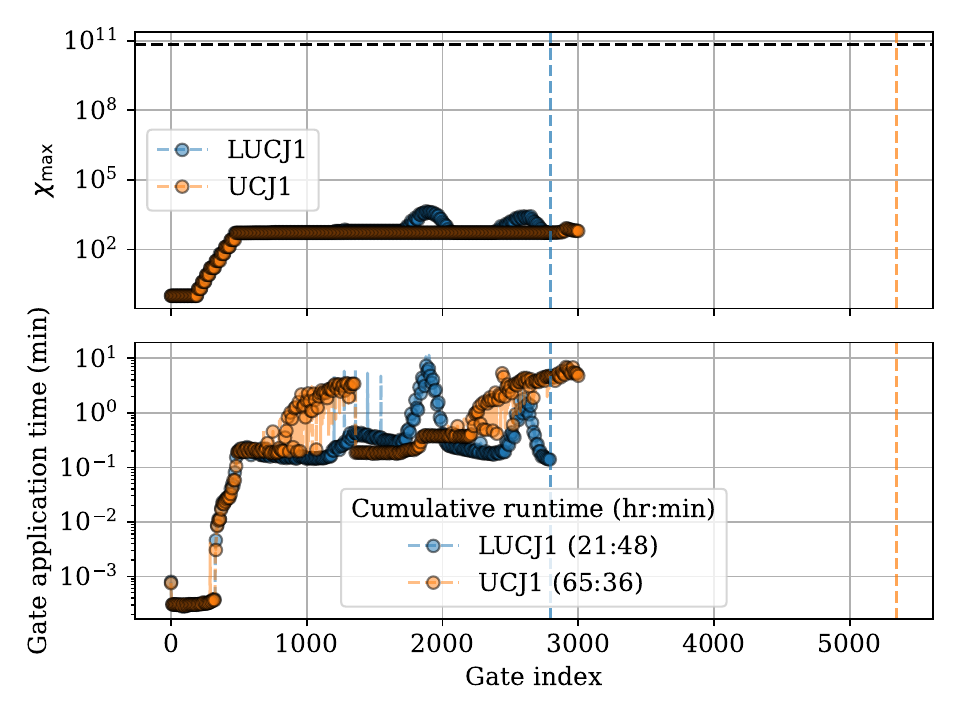}
    \caption{Analysis of MPS TEBD for the $n = 72$ qubit iron sulfur LUCJ1 and UCJ1 circuits. The top plot shows the maximum bond dimension and the bottom plot shows the gate application time vs gate index. Dashed vertical lines represent the final gate index for each circuit. The horizontal dashed black line in the top plot shows the maximum possible bond dimension of $2^{n / 2}$.}
    \label{fig:mps}
\end{figure}

We have also analyzed the cost of tensor network algorithms to simulate single-layer (L)UCJ circuits, in particular the well-known time-evolving block decimation (TEBD) algorithm for matrix product states (MPS). The structure of UCJ circuits (see Fig.~\ref{fig:ucj}) lends nicely to this simulation strategy, as the circuit is bipartite until the opposite spin Jastrow interactions. This means that two separate MPS evolutions can be carried out, and in practice we find that the bond dimension is relatively small for the CCSD initialized UCJ circuits we test. The challenging part for this simulation algorithm is the opposite spin Jastrow interactions. If most of these operations are removed as in the truncated LUCJ case, then MPS simulation can be effective, and we observe this in simulation. For full UCJ circuits with unrestricted opposite spin Jastrow interactions, the simulation becomes significantly harder and intractable for MPS TEBD.

To illustrate this, we perform MPS simulation of the single-layer LUCJ and UCJ circuits for the iron sulfur cluster experiment from~\cite{Robledo2025}. In Fig.~\ref{fig:mps}, we show the maximum bond dimension $\chi_\text{max}$ and the gate application time for each gate in the circuits. As can be seen, bond dimension quickly grows but then saturates during the initial orbital rotations. After this, the truncated LUCJ1 circuit entanglement grows during the Jastrow interactions, as expected. During this portion of the circuit, we see the entanglement reduce and then start to grow again. These bond dimensions are well below the maximum possible for an $n$ qubit circuit, but they are still large by practical considerations. Nonetheless, we are able to compute the final MPS for the LUCJ1 circuit in just under one day of runtime on a compute node with AMD EPYC 7H12 processors, 128 cores, and 412 GB available memory. From this, samples could be drawn from the MPS to compute the energy with SQD post-processing, in the same way as in~\cite{Robledo2025}.

For the UCJ1 ansatz, however, the situation is different. While the LUCJ1 circuit contains 75 CPHASE gates in the Jastrow operation, the unrestricted UCJ1 ansatz contains 2556 CPHASE gates in the Jastrow operation. Fully simulating this circuit will almost surely be out of reach for current classical computers. Indeed, after 65 hours on the same compute node described above, we were not able to get past the Jastrow operation. Interestingly, the entanglement has not grown appreciably during this portion of the circuit. Still, the non-local gates take significant time to implement due to the required chain of SWAP gates. Similar to matchgate + CPHASE algorithm, the distinction between local (LUCJ) and non-local significantly changes the computational hardness of simulating the circuit. Even if this circuit is simulable, different parameters in the circuit affecting entanglement growth and/or scaling to larger numbers of qubits will be challenging. In contrast, our algorithm working directly in the fermionic basis is able to handle any single-layer UCJ circuit, independent of locality, in polynomial time.

\end{document}